\newtheorem{theorem}{Theorem}[section]
\newtheorem{definition}{Definition}[section]
\title{\LARGE \bf 
Bayesian Learning-Based Adaptive Control for Safety Critical Systems
}
\author{David D. Fan$^{1,3}$, Jennifer Nguyen$^{2}$, Rohan Thakker$^{3}$, Nikhilesh Alatur$^{3}$,\\ Ali-akbar Agha-mohammadi$^{3}$,
and Evangelos A. Theodorou$^{1}$
\thanks{$^{1}$Institute for Robotics and Intelligent Machines, Georgia Institute of Technology, Atlanta, GA, USA}%
\thanks{$^{2}$Department of Mechanical and Aerospace Engineering, West Virginia University, Morgantown, WV, USA}%
\thanks{$^{3}$NASA Jet Propulsion Laboratory, California Institute of Technology, Pasadena, CA, USA}%
}
\begin{document}
\bstctlcite{IEEEexample:BSTcontrol}
\maketitle


\begin{abstract}
Deep learning has enjoyed much recent success, and applying state-of-the-art model learning methods to controls is an exciting prospect.  However, there is a strong reluctance to use these methods on safety-critical systems, which have constraints on safety, stability, and real-time performance.  We propose a framework which satisfies these constraints while allowing the use of deep neural networks for learning model uncertainties.  Central to our method is the use of Bayesian model learning, which provides an avenue for maintaining appropriate degrees of caution in the face of the unknown.  In the proposed approach, we develop an adaptive control framework leveraging the theory of stochastic CLFs (Control Lyapunov Functions) and stochastic CBFs (Control Barrier Functions) along with tractable Bayesian model learning via Gaussian Processes or Bayesian neural networks.  Under reasonable assumptions, we guarantee stability and safety while adapting to unknown dynamics with probability 1.  We demonstrate this architecture for high-speed terrestrial mobility targeting potential applications in safety-critical high-speed Mars rover missions.
\end{abstract}

\begin{keywords}
Robust/Adaptive Control of Robotic Systems, Robot Safety, Probability and Statistical Methods, Bayesian Adaptive Control, Deep Learning, Mars Rover
\end{keywords}


\section{Introduction}

The rapid growth of Artificial Intelligence (AI) and Machine Learning (ML) disciplines has created a tremendous impact in engineering disciplines, including finance, medicine, and general cyber-physical systems. The ability of ML algorithms to learn high dimensional dependencies has expanded the capabilities of traditional disciplines and opened up new opportunities towards the development of decision making systems which operate in complex scenarios.  Despite these recent successes \cite{silver2017mastering}, there is low acceptance of AI and ML algorithms to safety-critical domains, including human-centered robotics, and particularly in the flight and space industries.  For example, both recent and near-future planned Mars rover missions largely rely on daily human decision making and piloting, due to a very low acceptable risk for trusting black-box autonomy algorithms.  Therefore there is a need to develop computational tools and algorithms that bridge two worlds: the canonical structure of control theory, which is important for providing guarantees in safety-critical applications, and the data driven abstraction and representational power of machine learning, which is necessary for adapting the system to achieve resiliency against unmodeled disturbances.

\begin{figure}
\centering
\includegraphics[width=0.44\linewidth]{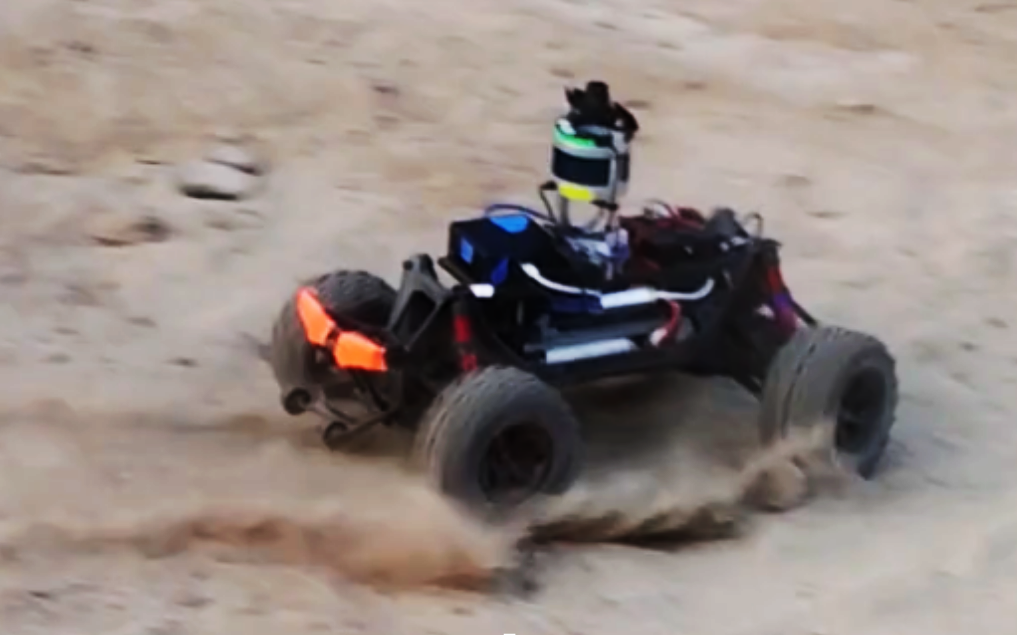}
\includegraphics[width=0.503\linewidth]{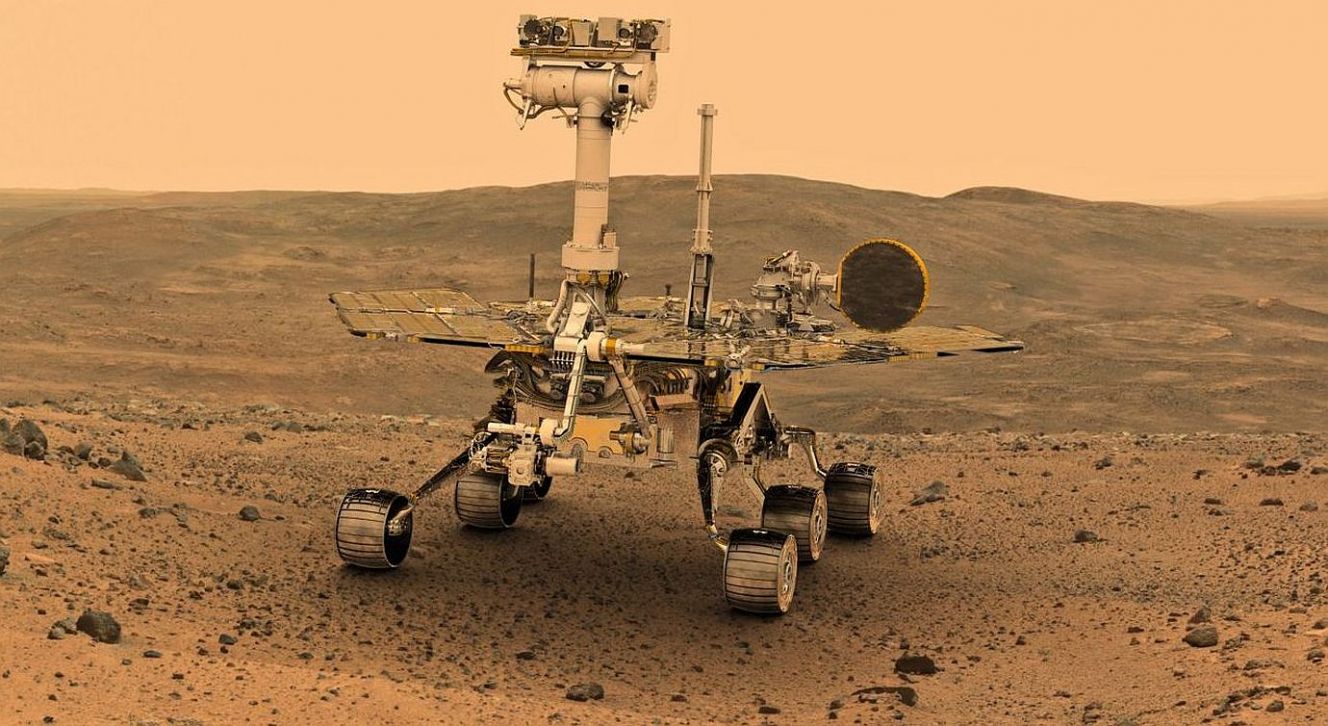}
\caption{The left image depicts a 1/5th scale RC car platform driving at the Mars Yard at JPL; and the right is a platform from the Mars Explore Rover (MER) mission. }
\label{fig:car}
\end{figure}

Towards this end, we propose a novel, lightweight framework for Bayesian adaptive control for safety critical systems, which we call BALSA (BAyesian Learning-based Safety and Adaptation).  This framework leverages ML algorithms for learning uncertainty representations of dynamics which in turn are used to generate sufficient conditions for stability using stochastic CLFs and safety using stochastic CBFs.  Treating the problem within a stochastic framework allows for a cleaner and more optimal approach to handling modeling uncertainty, in contrast to deterministic, discrete-time, or robust control formulations.  We apply our framework to the problem of high-speed agile autonomous vehicles, a domain where learning is especially important for dynamics which are complex and difficult to model (e.g., fast autonomous driving over rough terrain).  Potential Mars Sample Return (MSR) missions are one example in this domain.  Current Mars rovers (i.e., Opportunity and Curiosity) have driven on average $\sim$3km/year \cite{nasa_curiosity,nasa_opportunity}. In contrast, if MSR launches in 2028, then the rover has only 99 sols ($\sim$102 days) to complete potentially 10km \cite{2014_klein,  2019_nelessen}.  After factoring in the intermittent and heavily delayed communications to earth, the need for \textit{adaptive}, high-speed autonomous mobility could be crucial to mission success.

Along with the requirements for safety and adaptation, computational efficiency is of paramount importance for real systems.  Hardware platforms often have severe power and weight requirements, which significantly reduce their computational power.  Probabilistic learning and control over deep Bayesian models is a computationally intensive problem.  In contrast, we shorten the planning horizon and rely on a high-level, lower fidelity planner to plan desired trajectories.  Our method then guarantees safe trajectory tracking behavior, even if the given trajectory is not safe.  This frees up the computational budget for other tasks, such as online model training and inference.


\textbf{Related work} - 
Machine-learning based planning and control is a quickly growing field.  From Model Predictive Control (MPC) based learning \cite{wagner,williams2018information}, safety in reinforcement learning \cite{berkenkamp2017safe}, belief-space learning and planning \cite{kim2019bi}, to imitation learning \cite{ross2011reduction}, these approaches all demand considerations of safety under learning \cite{Ostafew2016a, Pereida2018, Hewing2017, Shi2018}.
Closely related to our work is Gaussian Process-based Bayesian Model Reference Adaptive Control (GP-MRAC) \cite{Chowdhary2015}, where modeling error is approximated with a Gaussian Process (GP).  However, computational speed of GPs scales poorly with the amount of data ($\mathcal{O}(N^3)$), and sparse approximations lack representational power.  Another closely related work is that of \cite{Nguyen}, who showed how to formulate a robust CLF which is tolerant to bounded model error.  Extensions to robust CBFs were given in \cite{nguyen2016optimal}.  A stated drawback of this approach is the conservative nature of the bounds on the model error.  In contrast, we incorporate model learning into our formulation, which allows for more optimal behavior, and leverage stochastic CLF and CBF theory to guarantee safety and stability with probability 1.  Other related works include \cite{Cheng2019}, which uses GPs in CBFs to learn the drift term in the dynamics $f(x)$, but uses a discrete-time, deterministic formulation.  \cite{Nguyen2015} combined L1 adaptive control and CLFs.  Learning in CLFs and CBFs using adaptive control methods (including neuro-adaptive control) has been considered in several works, e.g. \cite{Taylor2019,gurriet2018towards,azimi2018robust,azimi2018performance}.

\textbf{Contributions} - 
Here we take a unique approach to address the aforementioned issues, with the requirements of 1) adaptation to changes in the environment and the system, 2) adaptation which can take into account high-dimensional data, 3) guaranteed safety during adaptation, 4) guaranteed stability during adaptation and convergence of tracking errors, 5) low computational cost and high control rates.  Our contributions are fourfold:  First, we introduce a Bayesian adaptive control framework which explicitly uses the model uncertainty to guarantee stability, and is agnostic to the type of Bayesian model learning used.  Second, we extend recent stochastic safety theory to systems with \textit{switched} dynamics to guarantee safety with probability 1.  In contrast to adaptive control, switching dynamics are used to account for model updates which may only occur intermittently.  Third, we combine these approaches in a novel online-learning framework (BALSA).  Fourth, we compare the performance of our framework using different Bayesian model learning and uncertainty quantification methods.  Finally, we apply this framework to a high-speed driving task on rough terrain using an Ackermann-steering vehicle and validate our method on both simulation and hardware experiments.


\section{Safety and Stability under Model Learning via Stochastic CLF/CBFs}

Consider a stochastic system with SDE (stochastic differential equation) dynamics:
\begin{align}
\text{d}x_1=x_2\text{d}t, \quad  \text{d}x_2=(f(x) + g(x)u)\text{d}t + \Sigma(x)\text{d}\xi(t)
\label{eq:dynamics}
\end{align}
where $x_1,x_2\in\mathbb{R}^n$, $x=[x_1,x_2]^\intercal$, the controls are $u\in\mathbb{R}^n$, the diffusion is $\Sigma(x)\in\mathbb{R}^{n\times n}$, and $\xi(t)\in\mathbb{R}^n$ is a zero-mean Wiener process.  For simplicity we restrict our analysis to systems of this form, but emphasize that our results are extensible to systems of higher relative degree \cite{Nguyen2016}, as well as hybrid systems with periodic orbits \cite{ames2014rapidly}.   
A wide range of nonlinear control-affine systems in robotics can be transformed into this form.  In general, on a real system, $f$, $g$, and $\Sigma$ may not be fully known.  We assume $g(x)$ is known and invertible, which makes the analysis more tractable.  It will be interesting in future work to extend our approach to unknown, non-invertible control gains, or non-control affine systems (e.g. $\dot{x}=f(x,u)$).  Let $\hat{f}(x)$ be a given approximate model of $f(x)$.  We formulate a pre-control law with pseudo-control $\mu\in\mathbb{R}^n$:
\begin{equation}
\label{eq:precontrol}
    u = g(x)^{-1}(\mu - \hat{f}(x)) 
\end{equation}
which leads to the system dynamics being
\begin{equation}
    \text{d}x_1 = x_2\text{d}t, \quad \text{d}x_2 = (\mu + \Delta(x))\text{d}t + \Sigma(x)\text{d}\xi(t)
    \label{eq:mrac}
\end{equation}
where $\Delta(x) = f(x)-\hat{f}(x)$ is the modeling error, with $\Delta(x)\in\mathbb{R}^n$.

Suppose we are given a reference model and reference control from, for example, a path planner: 
\begin{align*}
    \text{d}x_{1rm}=x_{2rm}\text{d}t, \quad  \text{d}x_{2rm}=f_{rm}(x_{rm},u_{rm})\text{d}t
\end{align*}
The utility of the methods outlined in this work is for adaptive tracking of this given trajectory with guaranteed safety and stability.  We assume that $f_{rm}$ is continuously differentiable in $x_{rm}$, $u_{rm}$. Further, $u_{rm}$ is bounded and piecewise continuous, and that $x_{rm}$ is bounded for a bounded $u_{rm}$.  Define the error $e=x-x_{rm}$.  We split the pseudo-control input into four separate terms:
\begin{equation}
    \mu=\mu_{rm}+\mu_{pd} + \mu_{qp} -\mu_{ad}
\end{equation}
where we assign $\mu_{rm}=\dot{x}_{2rm}$ and $\mu_{pd}$ to a PD controller:
\begin{equation}
    \mu_{pd} = [-K_P -K_D] e
    \label{eq:pd}
\end{equation}
Additionally, we assign $\mu_{qp}$ as a pseudo-control which we optimize for and $\mu_{ad}$ as an adaptive element which will cancel out the model error.
Then we can write the dynamics of the model error $e$ as:
\begin{align}
    \text{d}e & = \begin{bmatrix}\text{d}e_1 \\ \text{d}e_2\end{bmatrix} = \begin{bmatrix}0 & I \\ -K_P & -K_D\end{bmatrix}e\text{d}t \\
    & \quad \quad + \begin{bmatrix}0 \\ I \end{bmatrix}\big((\mu_{qp} - \mu_{ad} + \Delta(x))\text{d}t + \Sigma(x)\text{d}\xi(t)\big)  \nonumber \\
    &= (Ae + G(\mu_{qp} - \mu_{ad} + \Delta(x))) \text{d}t + G\Sigma(x)\text{d}\xi(t)
    \label{eq:err_deterministic}
\end{align}
where the matrices $A$ and $G$ are used for ease of notation.  The gains $K_D,K_P$ should be chosen such that $A$ is Hurwitz.  When $\mu_{ad} = \Delta(x)$, the drift modeling error term is canceled out from the error dynamics.

Next, we require a method for learning or approximating the drift and diffusion terms $\Delta(x)$ and $\Sigma(x)$.  Such methods include Bayesian SDE approximation methods \cite{look2019differential}, Neural-SDEs \cite{liu2019neural}, or differential GP flows \cite{hegde2019deep}, to name a few.  This model should \textit{know what it doesn't know} \cite{li2011knows}, and should capture both the \textit{epistemic} uncertainty of the model, i.e., the uncertainty from lack of data, as well as the \textit{aleatoric} uncertainty, i.e., the uncertainty inherent in the system \cite{Roy2011}.  We expect that these methods will continue to be improved by the community.  We can use the second equation in (\ref{eq:mrac}) to generate data points to use for learning these terms in the SDE.  In discrete time, the learning problem is formulated as finding a mapping from input data $\bar{X}_t = x(t)$ to output data $\bar{Y}_t=(x_2(t+dt) - x_2(t))/dt - (\hat{f}(x(t)) + g(x(t))u(t))$. Given the $i^{th}$ dataset $\mathcal{D}_i=\{\bar{X}_t,\bar{Y}_t\}_{t=0, dt, \dots, t_i}$ with $i\in\mathbb{N}$, we can construct the $i^{th}$ model $\{m_i(x), \sigma_i(x)\}$, where $m_i(x)$ approximates the drift term $\Delta(x)$ and $\sigma_i(x)$ approximates the diffusion term $\Sigma(x)$.  Note that we do not require updating the model at each timestep, which significantly reduces computational load requirements and allows for training more expressive models (e.g., neural networks).  

In practical terms, in this work we opt for an approximate method for learning $\{m_i(x), \sigma_i(x)\}$, in which we view each data point in $\mathcal{D}_i$ as an independently and identically distributed sample, and set up a single timestep Bayesian regression problem, in which we model $\Delta(x)$ as a multivariate Gaussian random variable, i.e. $\bar{\Delta}_i(x)\sim \mathcal{N}(m_i(x),\sigma_i(x))$.  This approximation ignores the SDE nature of (\ref{eq:mrac}) and will not be a faithful approximation (See \cite{lew2020unpublished} for insightful comments on this problem).  However, until Bayesian SDE approximation methods improve, we believe this approach to be reasonable in practice.  Methods for producing reliable confidence bounds include a large class of Bayesian neural networks (\cite{Hafner2018,Harrison2018,gal2016dropout}), Gaussian Processes or its many approximate variants (\cite{shahriari2015taking,pan2017prediction}), and many others.  We compare several methods in our experimental results.  We leave a more principled learning approach using Bayesian SDE learning methods for future work.

After obtaining the joint model $\{m_i(x), \sigma_i(x)\}$, Equation (\ref{eq:err_deterministic}) can be written as the following switching SDE:
\begin{equation}
\text{d}e = (Ae + G(\mu_{qp} + \varepsilon^m_i(t))\text{d}t + G\sigma_i(x)\text{d}\xi(t)
\label{eq:err_dyn_noassumption}
\end{equation}
with $e(0) = x(0) - x_{rm}(0)$ and where $\varepsilon^m_i(t)=m_i(x) - \Delta(x)$.  $i\in\mathbb{N}$ is a switching index which updates each time the model is updated.  The main problem which we address is how to find a pseudo-control $\mu_{qp}$ which provably drives the tracking error to $0$ while simultaneously guaranteeing safety.

Since $\Delta(x)$ is not known a priori, one approach is to assume that $\|\varepsilon^m_i(t)\|$ is bounded by some known term.  The size of this bound will depend on the type of model used to represent the uncertainty, its training method, and the distribution of the data $\mathcal{D}_i$.  See \cite{Chowdhary2015} for such an analysis for sparse online Gaussian Processes.  For neural networks in general there has been some work on analyzing these bounds \cite{yarotsky2017error,shi2019neural}.  
For simplicity, let us assume the modeling error $\epsilon_i^m(t)=0$, and instead rely on $\sigma_i(x)$ to fully capture any remaining modeling error in the drift.  Then we have the following dynamics:
\begin{equation}
    \label{eq:err_dyn}
    \text{d}e = (Ae + G\mu_{qp})\text{d}t + G\sigma_i(x)\text{d}\xi(t)
\end{equation}
with $e(0) = x(0) - x_{rm}(0)$.  This is valid as long as $\sigma_i(x)$ captures both the epistemic and aleatoric uncertainty accurately.  Note also that if the bounds on $\|\varepsilon_i^m(t)\|$ are known, then our results are easily extensible to this case via (\ref{eq:err_dyn_noassumption}).


\subsection{Stochastic Control Lyapunov Functions for Switched Systems}

We establish sufficient conditions on $\mu_{qp}$ to guarantee convergence of the error process $e(t)$ to 0.  The result is a linear constraint similar to deterministic CLFs (e.g., \cite{nguyen2016optimal}).  The difference here is the construction a stochastic CLF condition for switched systems.  The switching is needed to account for online updates to the model as more data is accumulated.

In general, consider a switched SDE of It\^o type \cite{khasminskii2011stochastic} defined by:
\begin{equation}
    \emph{d}X(t) = a(t,X(t))\emph{d}t+\sigma_i(t,X(t))\emph{d}\xi(t)
    \label{eq:dyn_general}
\end{equation}
where $X\in\mathbb{R}^{n_1}$, $\xi(t)\in\mathbb{R}^{n_2}$ is a Wiener process, $a(t,X)$ is a $\mathbb{R}^{n_1}$-vector function, $\sigma_i(t,X)$ a $n_1\times n_2$ matrix, and $i\in\mathbb{N}$ is a switching index.  The switching index may change a finite number of times in any finite time interval.  For each switching index, $a$ and $\sigma$ must satisfy the Lipschitz condition $\|a(t,x) - a(t,y)\| + \|\sigma_i(t,x) - \sigma_i(t,y)\| \leq L \|x-y\|, \forall x,y \in D$ with $D$ compact.  Then the solution of (\ref{eq:dyn_general}) is a continuous Markov process.

\begin{definition}
$X(t)$ is said to be \emph{exponentially mean square ultimately bounded uniformly in} i if there exists positive constants $K,c_0,\tau$ such that for all $t, X_0, i$, we have that $\mathbb{E}_{X_0}\|X(t)\|^2\leq K + c_0\|X_0\|^2 e^{-\tau t}$.
\end{definition}
We first restate the following theorem from \cite{Chowdhary2015}:
\begin{theorem}
\label{thm:1}
Let $X(t)$ be the process defined by the solution to (\ref{eq:dyn_general}), and let $V(t,X)$ be a function of class $\mathcal{C}^2$ with respect to $X$, and class $\mathcal{C}^1$ with respect to $t$.  Denote the It\^o differential generator by $\mathcal{L}$.  If 1) $-\alpha_1 + c_1\|X\|^2 \leq V(t,X)\leq c_3\|X\|^2+\alpha_2$ for real $\alpha_1,\alpha_2,c_1>0$; and 2) $\mathcal{L}V(t,X)\leq \beta_i - c_2V(t,X)$ for real $\beta_i,c_2>0$, and all i; then the process $X(t)$ is exponentially mean square ultimately bounded uniformly in $i$.  Moreover, $K=\frac{\alpha_2}{c_1} + \max_i(\frac{|\beta_i|}{c_1c_2}+\frac{\alpha_1}{c_1})$, $c_0=\frac{c_3}{c_1}$, and $\tau=c_2$.
\end{theorem}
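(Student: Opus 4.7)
The natural strategy is to combine an exponential Lyapunov trick with It\^o's formula, reducing the problem to a Gr\"onwall-style inequality for $\mathbb{E}[V(t,X(t))]$ that is then converted to a bound on $\mathbb{E}\|X(t)\|^2$ via the quadratic sandwich hypothesis (1). The plan is to first prove the result on any interval on which the switching index $i$ is constant, and then glue the pieces together using continuity of $X$ at switching instants.

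Concretely, first I would fix a switching index $i$ and apply It\^o's formula to the auxiliary process $W(t,X) = e^{c_2 t} V(t,X)$. Since $V$ is $\mathcal{C}^{1,2}$, this yields
\begin{equation*}
\mathrm{d} W(t,X(t)) \;=\; e^{c_2 t}\bigl(c_2 V(t,X) + \mathcal{L}V(t,X)\bigr)\,\mathrm{d}t \;+\; e^{c_2 t}(\nabla_X V)^\top \sigma_i(t,X)\,\mathrm{d}\xi(t).
\end{equation*}
Hypothesis (2) immediately collapses the drift: $c_2 V + \mathcal{L}V \leq \beta_i \leq \beta^{*} := \max_i |\beta_i|$. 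Integrating from $0$ to $t$ and taking expectations kills the stochastic integral (it is a true martingale after a standard localization argument, justified because the Lipschitz condition on $a,\sigma_i$ ensures $\mathbb{E}\|X(t)\|^2 < \infty$ locally, and $V$ has at most quadratic growth by hypothesis (1)). This yields
\begin{equation*}
e^{c_2 t}\,\mathbb{E}[V(t,X(t))] \;\leq\; V(0,X_0) + \frac{\beta^{*}}{c_2}\bigl(e^{c_2 t}-1\bigr),
\end{equation*}
so that $\mathbb{E}[V(t,X(t))] \leq V(0,X_0)\,e^{-c_2 t} + \beta^{*}/c_2$.

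Next I would invoke the sandwich bound (1) twice: the upper bound to control the initial condition as $V(0,X_0) \leq c_3\|X_0\|^2 + \alpha_2$, and the lower bound to convert back to $\|X\|^2$ via $c_1\|X(t)\|^2 \leq V(t,X(t)) + \alpha_1$. Chaining these gives
\begin{equation*}
\mathbb{E}_{X_0}\|X(t)\|^2 \;\leq\; \frac{c_3}{c_1}\|X_0\|^2 e^{-c_2 t} + \frac{\alpha_2}{c_1} + \frac{\beta^{*}}{c_1 c_2} + \frac{\alpha_1}{c_1},
\end{equation*}
which is precisely the claimed bound with $\tau = c_2$, $c_0 = c_3/c_1$, and $K = \alpha_2/c_1 + \max_i(|\beta_i|/(c_1 c_2) + \alpha_1/c_1)$. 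Note that only $\beta^{*}$ enters the constant, so the bound is uniform in $i$.

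The main subtlety, and the step I expect to be the main obstacle, is handling the switching rigorously. Because the index $i$ may jump finitely often on any finite interval, the generator $\mathcal{L}$ changes and $\sigma_i$ is only piecewise in $t$. I would handle this by partitioning $[0,t]$ at the (random or deterministic) switching times $0 = \tau_0 < \tau_1 < \cdots < \tau_N = t$, applying the calculation above on each subinterval $[\tau_k,\tau_{k+1}]$ with the fixed index active there, and using continuity of $X$ at each $\tau_k$ so that $V(\tau_k, X(\tau_k))$ serves as the new ``initial value.'' The key observation is that on every subinterval the same uniform bound $c_2 V + \mathcal{L}V \leq \beta^{*}$ holds, so telescoping the piecewise inequality
$\mathbb{E}[V(\tau_{k+1},X(\tau_{k+1}))] \leq \mathbb{E}[V(\tau_k,X(\tau_k))]\,e^{-c_2(\tau_{k+1}-\tau_k)} + (\beta^{*}/c_2)(1 - e^{-c_2(\tau_{k+1}-\tau_k)})$
collapses the switching structure and recovers the same scalar inequality as in the non-switching case, from which the sandwich conversion yields the stated exponential mean-square ultimate boundedness uniformly in the switching sequence.
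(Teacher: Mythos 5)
Your proposal is correct, but there is nothing in the paper to compare it against: the paper's ``proof'' of this theorem is a one-line citation to Chowdhary et al.\ (Theorem 1 of that reference), so you have supplied the argument the authors omitted. What you give is the standard exponential-Lyapunov argument --- apply It\^o's formula to $e^{c_2 t}V(t,X)$, use hypothesis (2) to dominate the drift by $\beta^* = \max_i|\beta_i|$, take expectations to kill the (localized) stochastic integral, and convert back to $\mathbb{E}\|X(t)\|^2$ with the quadratic sandwich in hypothesis (1) --- and the constants you obtain match the stated $K$, $c_0=c_3/c_1$, $\tau=c_2$ exactly (the telescoped sum $\sum_k e^{-c_2(t-\tau_{k+1})}\bigl(1-e^{-c_2(\tau_{k+1}-\tau_k)}\bigr)=1-e^{-c_2 t}\le 1$ is what makes the switching collapse cleanly). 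Two points worth tightening if you write this out in full: (i) the localization step needs the a priori moment bound $\mathbb{E}\|X(t)\|^2<\infty$ on finite intervals, which follows from the linear-growth/Lipschitz hypotheses on $a$ and $\sigma_i$ but should be stated, since otherwise the quadratic-growth bound on $V$ does not by itself make the It\^o integral a true martingale; and (ii) the partition-and-restart argument at switching times implicitly requires those times to be stopping times (in this paper they are model-update instants, effectively deterministic, so this is harmless, but it is the hypothesis that licenses using $V(\tau_k,X(\tau_k))$ as a legitimate initial condition via the Markov property). Neither is a gap in the idea; both are routine.
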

\begin{proof}
See \cite{Chowdhary2015} Theorem 1. 
\end{proof}

We use Theorem \ref{thm:1} to derive a stochastic CLF sufficient condition on $\mu_{qp}$ for the tracking error $e(t)$.  Consider the stochastic Lyapunov candidate function $V(e) = \frac{1}{2}e^\intercal P e $
where $P$ is the solution to the Lyapunov equation $A^\intercal P + P A = -Q$, where $Q$ is any symmetric positive-definite matrix.

\begin{theorem}
\label{thm:2}
Let $e(t)$ be the switched stochastic process defined by (\ref{eq:err_dyn}), and let $\epsilon>0$ be a positive constant.  Suppose for all $t$, $\mu_{qp}$ and the relaxation variable $d_i^1\in\mathbb{R}$ satisfy the inequality:
\begin{align}
    \label{eq:clf}
    &\qquad\qquad\qquad \Psi^0_i + \Psi^1\mu_{qp}\leq d_i^1\\
    &\Psi^0_i = -\frac{1}{2}e^\intercal Q e + \frac{1}{\epsilon}V(e) + \frac{1}{2}\emph{tr}(G\sigma_i\sigma_i^\intercal G^\intercal P)\nonumber\\
    &\Psi^1 = e^\intercal P G.\nonumber
\end{align}
Then $e(t)$ is exponentially mean-square ultimately bounded uniformly in $i$.  Moreover if (\ref{eq:clf}) is satisfied with $d_i^1<0$ for all $i$, then $e(t)\rightarrow 0$ exponentially in the mean-squared sense.
\end{theorem}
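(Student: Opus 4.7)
The plan is to apply Theorem~\ref{thm:1} directly with the candidate $V(e)=\tfrac12 e^\intercal P e$ and the error dynamics (\ref{eq:err_dyn}). The first move is to compute the It\^o differential generator $\mathcal{L}V$. Since $V$ is a pure quadratic, $\nabla_e V = Pe$ and $\nabla_e^2 V = P$, so applying the generator of (\ref{eq:err_dyn}) gives
\begin{equation*}
\mathcal{L}V(e) = e^\intercal P(Ae + G\mu_{qp}) + \tfrac12\,\text{tr}\!\bigl(G\sigma_i\sigma_i^\intercal G^\intercal P\bigr).
\end{equation*}
Using the Lyapunov equation $A^\intercal P + PA = -Q$, the drift term simplifies via $e^\intercal PAe = \tfrac12 e^\intercal(PA + A^\intercal P)e = -\tfrac12 e^\intercal Q e$. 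Collecting terms and comparing with the definitions of $\Psi^0_i$ and $\Psi^1$, one sees the clean identity
\begin{equation*}
\mathcal{L}V(e) + \tfrac{1}{\epsilon}V(e) \;=\; \Psi^0_i + \Psi^1\mu_{qp}.
\end{equation*}

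With this in hand, the hypothesis (\ref{eq:clf}) is exactly the bound $\mathcal{L}V(e) \le d_i^1 - \tfrac{1}{\epsilon}V(e)$, i.e.\ Theorem~\ref{thm:1}'s condition~2 with $\beta_i = d_i^1$ and $c_2 = 1/\epsilon$. For condition~1, positive-definiteness of $P$ yields $\tfrac12\lambda_{\min}(P)\|e\|^2 \le V(e) \le \tfrac12\lambda_{\max}(P)\|e\|^2$, so we take $\alpha_1 = \alpha_2 = 0$, $c_1 = \tfrac12\lambda_{\min}(P)$, and $c_3 = \tfrac12\lambda_{\max}(P)$. The Lipschitz hypothesis required in the setup of (\ref{eq:dyn_general}) is inherited from the assumed regularity of $\sigma_i$ on compact sets together with the fact that the drift $Ae + G\mu_{qp}$ is affine in $e$ (with $\mu_{qp}$ a feedback of $e$ in the usual QP-based implementation). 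Invoking Theorem~\ref{thm:1} then gives the first conclusion: $e(t)$ is exponentially mean-square ultimately bounded uniformly in $i$.

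For the second conclusion, assume $d_i^1 < 0$ for every $i$. Then from $\mathcal{L}V \le d_i^1 - \tfrac{1}{\epsilon}V$ and $d_i^1 < 0$ we immediately obtain the sharper pointwise bound $\mathcal{L}V(e(t)) \le -\tfrac{1}{\epsilon}V(e(t))$. Taking expectations and using Dynkin's formula on each interval between switches (and summing across the finitely many switches in $[0,t]$, which cause no jumps in $e$ since only $\sigma_i$ switches), we get $\tfrac{d}{dt}\mathbb{E}[V(e(t))] \le -\tfrac{1}{\epsilon}\mathbb{E}[V(e(t))]$, so Gr\"onwall's inequality yields $\mathbb{E}[V(e(t))] \le V(e(0))\,e^{-t/\epsilon}$. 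Combined with the two-sided quadratic sandwich on $V$, this delivers $\mathbb{E}\|e(t)\|^2 \le (c_3/c_1)\|e(0)\|^2 e^{-t/\epsilon}$, i.e.\ exponential mean-square convergence to $0$.

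The main obstacle is conceptual rather than computational: one must verify that the switching in $\sigma_i$ (and implicitly in $\mu_{qp}$ through the QP that uses $\sigma_i$) is benign for the stochastic Lyapunov argument. The saving grace is that the bound $\mathcal{L}V \le d_i^1 - \tfrac{1}{\epsilon}V$ holds \emph{for every} $i$ with the \emph{same} coefficient $1/\epsilon$, so Theorem~\ref{thm:1}'s uniform-in-$i$ formulation applies without modification; the only place the switches enter the final ultimate-bound constant $K$ is through $\max_i |d_i^1|$, exactly as in that theorem. Apart from this bookkeeping, the derivation reduces to the generator computation and one invocation of Gr\"onwall.
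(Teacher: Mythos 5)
Your proposal is correct and follows essentially the same route as the paper: the identical generator computation $\mathcal{L}V(e) = -\tfrac12 e^\intercal Q e + e^\intercal P G\mu_{qp} + \tfrac12\mathrm{tr}(G\sigma_i\sigma_i^\intercal G^\intercal P)$, followed by an application of Theorem~\ref{thm:1} with the same constants $\alpha_1=\alpha_2=0$, $c_1=\tfrac12\lambda_{\min}(P)$, $c_2=1/\epsilon$, $c_3=\tfrac12\lambda_{\max}(P)$, $\beta_i=d_i^1$. Your only departure is in the final step, where you derive the $d_i^1<0$ convergence claim directly via Dynkin and Gr\"onwall rather than reading it off the ultimate bound of Theorem~\ref{thm:1}; this is if anything slightly cleaner, since it makes explicit why the offset term vanishes when the relaxation variable is negative.
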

\begin{proof}
The Lyapunov candidate function $V(e)$ is bounded above and below by $\frac{1}{2}\lambda_{min}(P)\|e\|^2 \leq V(e(t))\leq \frac{1}{2}\lambda_{max}(P)\|e\|^2$.  We have the following It\^o differential of the Lyapunov candidate:
\begin{align}
    \mathcal{L}V(e)&=\sum_{j}\frac{\partial V(e)}{\partial e_j}Ae_j + \frac{1}{2}\sum_{j,k}[G\sigma_i\sigma_i^\intercal G^\intercal]_{jk}\frac{\partial^2V(e)}{\partial e_k\partial e_j}\nonumber\\
    &= -\frac{1}{2}e^\intercal Q e + e^\intercal P G \mu_{qp} + \frac{1}{2}\emph{tr}(G\sigma_i\sigma_i^\intercal G^\intercal P).
\end{align}
Rearranging, (\ref{eq:clf}) becomes $\mathcal{L}V(e)\leq -\frac{1}{\epsilon}V(e)$.  Setting $\alpha_1=\alpha_2=0,\beta_i=d_i^1,c_1=\frac{1}{2}\lambda_{min}(P),c_2=\frac{1}{\epsilon},c_3=\frac{1}{2}\lambda_{max}(P)$, we see that the conditions for Theorem \ref{thm:1} are satisfied and $e(t)$ is exponentially mean square ultimately bounded uniformly in $i$.  Moreover,
\begin{multline}
    \mathbb{E}_{e_0}\|e(t)\|^2 \leq\\
    \kappa(P)\|e_0\|^2e^{-\frac{t}{\epsilon}} + \max_i(\frac{|d_i^1|}{4\lambda_{min}(P)\lambda_{max}(P)})
\end{multline}
where $\kappa(P)$ is the condition number of the matrix $P$.
Therefore if $d_i^1<0$ for all $i$, $e(t)$ converges to 0 exponentially in the mean square sense.
\end{proof}

The relaxation variable $d_i^1$ allows us to find solutions for $\mu_{qp}$ which may not always strictly satisfy a Lyapunov stability criterion $\mathcal{L}V\leq 0$.  This allows us to incorporate additional constraints on $\mu_{qp}$ at the cost of losing convergence of the error $e$ to 0.  Fortunately, the error will remain bounded by the largest $d_i^1$.  In practice we re-optimize for a new $d_i^1$ at each timestep.  This does not affect the result of Theorem \ref{thm:2} as long as we re-optimize a finite number of times for any given finite interval.  

One highly relevant set of constraints we want to satisfy are control constraints $Hu\leq b$, where $H\in\mathbb{R}^{n_c}\times\mathbb{R}^{n}$ is a matrix and $b\in\mathbb{R}^{n_c}$ is a vector.  Let $\mu_d=\mu_{rm}+\mu_{pd}-\mu_{ad}$.  Recall the pre-control law (\ref{eq:precontrol}). Then the control constraint is:
\begin{equation}
    Hg^{-1}(x)\mu_{qp}\leq H\hat{g}^{-1}(x)(\mu_d - \hat{f}(x)) + b.
\end{equation}
Next we formulate additional constraints to guarantee safety.

\subsection{Stochastic Control Barrier Functions for Switched Systems}
We leverage recent results on stochastic control barrier functions \cite{clark2019} to derive constraints linear in $\mu_{qp}$ which guarantee the process $x(t)$ satisfies a safety constraint, i.e., $x(t)\in\mathcal{C}$ for all $t$.  The set $\mathcal{C}$ is defined by a locally Lipschitz function $h: \mathbb{R}^n\rightarrow\mathbb{R}$ as $\mathcal{C}=\{x:h(x)\geq 0\}$ and $\partial\mathcal{C}=\{x:h(x)=0\}$.  We first extend the results of \cite{clark2019} to switched stochastic systems.
\begin{definition}
Let $X(t)$ be a switched stochastic process defined by (\ref{eq:dyn_general}).  Let the function $B:\mathbb{R}^n\rightarrow\mathbb{R}$ be locally Lipschitz and twice-differentiable on $\emph{int}(\mathcal{C})$.  If there exists class-K functions $\gamma_1$ and $\gamma_2$ such that for all $X$, $1/\gamma_1(h(X))\leq B(X)\leq 1/\gamma_2(h(X))$, then $B(x)$ is called a \emph{candidate control barrier function}.
\end{definition}
\begin{definition}
Let $B(x)$ be a candidate control barrier function.  If there exists a class-K function $\gamma_3$ such that $\mathcal{L}B(X)\leq \gamma_3(h(X))$, then $B(x)$ is called a \emph{control barrier function (CBF)}.
\end{definition}

\begin{theorem}
\label{thm:3}
Suppose there exists a CBF for the switched stochastic process $X(t)$ defined by (\ref{eq:dyn_general}).  If $X_0\in\mathcal{C}$, then for all $t$, $Pr(X(t)\in\mathcal{C})=1$.
\end{theorem}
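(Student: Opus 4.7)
The plan is to extend the non-switched stochastic barrier argument of \cite{clark2019} by exploiting that the defining CBF inequality $\mathcal{L}B(X)\leq \gamma_3(h(X))$ must hold for every switching index $i$. Since $B(X)\geq 1/\gamma_1(h(X))$ with $\gamma_1$ class-$\mathcal{K}$, $B$ blows up as $X$ approaches $\partial\mathcal{C}$, so it suffices to keep $\mathbb{E}[B(X(t))]$ finite for every $t$: a Markov-type inequality then precludes the process from reaching the boundary with positive probability.

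Concretely, I would localize by introducing the stopping times $\tau_k=\inf\{t:h(X(t))\leq 1/k\}$ and the exit time $\tau=\inf\{t:X(t)\notin\mathcal{C}\}$, noting that $\tau=\lim_k\tau_k$ by continuity of the sample paths of $X$. Fix a bounded horizon $[0,T]$ and let $0=t_0<t_1<\cdots<t_N\leq T$ be the (by hypothesis finitely many) switching times in $[0,T]$. On each sub-interval $[t_j\wedge\tau_k,\,t_{j+1}\wedge\tau_k]$ the dynamics follow an ordinary (non-switched) It\^o SDE with a fixed generator $\mathcal{L}_{i_j}$, and the stopped process lies in $\text{int}(\mathcal{C})$ where $B$ is $C^2$. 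Applying Dynkin's formula on each sub-interval and summing over $j$ gives, after using the uniform-in-$i$ CBF inequality,
\begin{equation}
\mathbb{E}\bigl[B(X(t\wedge\tau_k))\bigr]\leq B(X_0) + \mathbb{E}\int_0^{t\wedge\tau_k}\gamma_3(h(X(s)))\,ds.
\end{equation}

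To close the argument, the compactness of the working domain $D$ (in the Lipschitz hypothesis on $a,\sigma_i$) makes $\gamma_3\circ h$ bounded along trajectories, so the right-hand side is $O(T)$ uniformly in $k$. On the event $\{\tau_k\leq t\}$ one has $B(X(\tau_k))\geq 1/\gamma_1(1/k)$, so Markov's inequality yields
\begin{equation}
\Pr(\tau_k\leq t)\leq \gamma_1(1/k)\,\mathbb{E}\bigl[B(X(t\wedge\tau_k))\bigr]\;\xrightarrow{k\to\infty}\;0,
\end{equation}
since $\gamma_1$ is class-$\mathcal{K}$ and $\gamma_1(1/k)\to 0$. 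Letting $k\to\infty$ and using $\{\tau\leq t\}\subseteq\{\tau_k\leq t\}$ gives $\Pr(\tau\leq t)=0$, i.e. $\Pr(X(t)\in\mathcal{C})=1$ for every $t\leq T$, and since $T$ was arbitrary the conclusion follows.

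The main obstacle I anticipate is the careful bookkeeping at the switching junctions: the paths of $X$ remain continuous across switch times but the It\^o generator jumps, so Dynkin's formula must be applied piecewise and then spliced, which is only legitimate because only finitely many switches occur in any bounded interval---fortunately this is baked into the hypotheses on $\sigma_i$. A secondary technicality is justifying Dynkin's formula when $B$ is only guaranteed $C^2$ on $\text{int}(\mathcal{C})$; this is precisely why the localization via $\tau_k$ (rather than $\tau$) is essential, as it keeps the stopped process strictly in the interior before one passes to the limit $k\to\infty$.
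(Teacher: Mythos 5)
Your argument is correct in substance, but it takes a genuinely different route from the paper. The paper treats the non-switched result of \cite{clark2019} as a black box: it partitions time at the switching instants $t_i$, invokes that theorem on each interval $[t_{i-1},t_i)$ where the diffusion $\sigma_i$ is fixed, uses continuity of the sample paths (and closedness of $\mathcal{C}=\{h\geq 0\}$) to conclude $X_{t_i}\in\mathcal{C}$ almost surely at each junction, and then inducts over the finitely many switches in any bounded interval. You instead re-derive the barrier certificate estimate from first principles: localizing with the stopping times $\tau_k$, splicing Dynkin's formula across the switch times, and closing with Markov's inequality against the blow-up $B\geq 1/\gamma_1(h)$ near $\partial\mathcal{C}$. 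What your version buys is self-containedness and transparency --- it makes explicit exactly where the ``finitely many switches per finite interval'' hypothesis and the uniform-in-$i$ inequality $\mathcal{L}B\leq\gamma_3(h)$ enter, and it correctly isolates the technicality that $B$ is only $\mathcal{C}^2$ on $\mathrm{int}(\mathcal{C})$, which the paper never surfaces. What the paper's version buys is brevity and insulation from the internals of the cited proof. Two small points to tighten in yours: the bound $\mathbb{E}[B(X(t\wedge\tau_k))]\leq B(X_0)+O(T)$ is vacuous if $X_0\in\partial\mathcal{C}$ (where $B(X_0)=\infty$), an edge case the theorem statement nominally permits and which both proofs quietly exclude; and the claim that $\gamma_3\circ h$ is bounded ``along trajectories'' leans on the process remaining in the compact set $D$ from the Lipschitz hypothesis, which is a standing assumption you should invoke explicitly rather than leave implicit.
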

\begin{proof}
\cite{clark2019} Theorem 1 provides a proof of the result for non-switched stochastic processes.  Let $t_i$ denote the switching times of $X(t)$, i.e., when $t\in[0,t_0)$, the process $X(t)$ has diffusion matrix $\sigma_0(X)$, and when $t\in[t_{i-1},t_i)$ for $i>0$, the process $X(t)$ has diffusion matrix $\sigma_i(X)$.  If $X_0\in\mathcal{C}$, then $X_{t}\in\mathcal{C}$ for all $t\in[0,t_0)$ with probability 1 since the process $X(t)$ does not switch in the time interval $t\in[0,t_0)$.  By similar argument for any $i>0$ if $X_{t_{i-1}}\in\mathcal{C}$ then $X_t\in\mathcal{C}$ for all $t\in[t_{i-1},t_i)$ with probability 1.  This also implies that  $X_{t_i}\in\mathcal{C}$, since $X(t)$ is a continuous Markov process.  Then $X_t\in\mathcal{C}$ for all $t\in[t_{i},t_{i+1})$ with probability 1.  Then by induction, for all $t$, $Pr(X(t)\in\mathcal{C})=1$.
\end{proof}
Next, we establish a linear constraint condition sufficient for $\mu_{qp}$ to guarantee safety for (\ref{eq:err_dyn}).  Rewrite (\ref{eq:err_dyn}) in terms of $x(t)$ as:
\begin{align}
    \text{d}x=(A_0x + G(\mu_d + \mu_{qp}))\text{d}t  + G\sigma_i(x)\text{d}\xi(t)\\
    A_0=\begin{bmatrix}0 & I \\ 0 & 0\end{bmatrix}, \quad \mu_d = \mu_{rm} + \mu_{pd} - \mu_{ad}.\nonumber
    \label{eq:dyn_x}
\end{align}

\begin{theorem}
\label{thm:4}
Let $x(t)$ be a switched stochastic process defined by (\ref{eq:dyn_x}).  Let $B(x)$ be a candidate control barrier function.  Let $\gamma_3$ be a class-K function.  Suppose for all $t$, $\mu_{qp}$ satisfies the inequality: 
\begin{align}
    & \qquad \qquad \qquad \qquad \Phi^0_i + \Phi^1\mu_{qp}\leq 0\\
    &\Phi^0_i = \frac{\partial B}{\partial x}^{\intercal}(A_0x + G\mu_d) - \gamma_3(h(x)) + \frac{1}{2}\emph{tr}(G\sigma_i\sigma_i^\intercal G^\intercal \frac{\partial^2 B}{\partial x^2}) \nonumber\\
    &\Phi^1 = \frac{\partial B}{\partial x}^{\intercal}G \nonumber
    \label{eq:cbf}
\end{align}
Then $B(x)$ is a CBF and (\ref{eq:cbf}) is a sufficient condition for safety, i.e., if $x_0\in\mathcal{C}$, then $x(t)\in\mathcal{C}$ for all $t$ with probability 1.
\end{theorem}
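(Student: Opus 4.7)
The plan is to recognize that Theorem 4 is essentially a verification result: the inequality $\Phi_i^0 + \Phi^1 \mu_{qp} \leq 0$ is engineered to be algebraically equivalent to the CBF condition $\mathcal{L}B(x) \leq \gamma_3(h(x))$, after which Theorem \ref{thm:3} immediately yields the safety conclusion. So the work is mostly an Itô-generator calculation followed by an appeal to the prior theorem.

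First I would write out the Itô differential generator $\mathcal{L}B(x)$ for the switched SDE (\ref{eq:dyn_x}). Since $B$ is twice-differentiable on $\text{int}(\mathcal{C})$, Itô's lemma gives
\begin{equation*}
\mathcal{L}B(x) = \frac{\partial B}{\partial x}^{\intercal}\bigl(A_0 x + G(\mu_d + \mu_{qp})\bigr) + \frac{1}{2}\text{tr}\!\left(G\sigma_i\sigma_i^{\intercal}G^{\intercal}\frac{\partial^2 B}{\partial x^2}\right).
\end{equation*}
Next I would split the drift term using the linearity $G(\mu_d + \mu_{qp}) = G\mu_d + G\mu_{qp}$ and collect the $\mu_{qp}$-dependent piece into $\Phi^1 \mu_{qp}$, so that
\begin{equation*}
\mathcal{L}B(x) = \Phi_i^0 + \gamma_3(h(x)) + \Phi^1 \mu_{qp}.
\end{equation*}
The hypothesis $\Phi_i^0 + \Phi^1 \mu_{qp} \leq 0$ is therefore equivalent to $\mathcal{L}B(x) \leq \gamma_3(h(x))$, which is precisely the definition of a control barrier function applied to the process $x(t)$. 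Hence $B$ qualifies as a CBF under each switching index $i$.

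Finally, I would invoke Theorem \ref{thm:3}, applied to the switched process $x(t)$ with diffusion matrices $\{\sigma_i\}$: since $B$ satisfies the CBF inequality uniformly across switching modes and $x_0 \in \mathcal{C}$, the theorem yields $\Pr(x(t) \in \mathcal{C}) = 1$ for all $t$. The only subtlety worth a sentence is that the inequality must hold for the currently active mode $i$ at every $t$, which is automatic since the statement assumes $\Phi_i^0 + \Phi^1 \mu_{qp} \leq 0$ for all $t$ with the index $i$ understood as the active one.

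I do not expect any serious obstacles: the hardest-looking part is carrying out the Itô expansion cleanly, but it is routine once one treats the switched SDE as a standard SDE on each interval between switching times, exactly as was done in the proof of Theorem \ref{thm:3}. The result is really a bookkeeping statement translating the linear constraint on $\mu_{qp}$ into the abstract CBF property, so that the heavy lifting (the probability-1 invariance under switching) is fully delegated to Theorem \ref{thm:3}.
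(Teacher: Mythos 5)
Your proposal is correct and follows essentially the same route as the paper's proof: compute the It\^o generator $\mathcal{L}B(x)$ for the switched dynamics, observe that the linear constraint $\Phi^0_i + \Phi^1\mu_{qp}\leq 0$ rearranges exactly to $\mathcal{L}B(x)\leq\gamma_3(h(x))$, and then delegate the probability-1 invariance to Theorem \ref{thm:3}. Your added remark that the inequality must hold for the currently active switching mode at each $t$ is a fair clarification the paper leaves implicit, but it does not change the argument.
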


\begin{proof}
We have the following It\^o differential of the CBF candidate $B(x)$:
\begin{multline}
    \mathcal{L}B(x) = \frac{\partial B}{\partial x}^{\intercal} (A_0x + G(\mu_d + \mu_{qp}))\\
    + \frac{1}{2}\emph{tr}(G\sigma_i\sigma_i^\intercal G^\intercal \frac{\partial^2 B}{\partial x^2}).
\end{multline}
Rearranging (\ref{eq:cbf}) it is clear that $\mathcal{L}B(x)\leq \gamma_3(h(x))$.  Then $B(x)$ is a CBF and the result follows from Theorem \ref{thm:3}.
\end{proof}


\subsection{Safety and Stability under Model Adaptation}

We can now construct a CLF-CBF Quadratic Program (QP) in terms of $\mu_{qp}$ incorporating both the adaptive stochastic CLF and CBF conditions, along with control limits (Equation (\ref{eq:clf_cbf_qp})):
\begin{align}
\label{eq:clf_cbf_qp}
    \arg\min_{\mu_{qp},d_1,d_2} \quad & \mu_{qp}^\intercal \mu_{qp} + p_1d_1^2 + p_2d_2^2\\
    s.t. \quad & \Psi^0_i + \Psi^1\mu_{qp} \leq d_1 \tag{\textbf{Adaptive CLF}}\\
    \quad & \Phi^0_i + \Phi^1\mu_{qp}\leq d_2 \tag{\textbf{Adaptive CBF}}\\
    & Hg^{-1}(x)\mu_{qp}\leq Hg^{-1}(x)(\mu_d - \hat{f}(x)) + b\nonumber
\end{align}

In practice, several modifications to this QP are often made (\cite{Nguyen2016},\cite{ames}).  In addition to a relaxation term for the CLF in Theorem \ref{thm:2}, we also include a relaxation term $d^2$ for the CBF.  This helps to ensure the QP is feasible and allows for slowing down as much as possible when the safety constraint cannot be avoided due to control constraints, creating, e.g., lower impact collisions.  Safety is still guaranteed as long as the relaxation term is less than 0.  For an example of guaranteed safety in the presence of this relaxation term see \cite{nguyen2016optimal}, also see \cite{gurriet2018towards} for an approach to handling safety with control constraints.  The emphasis of this work is on guaranteeing safety in the presence of adaptation so we leave these considerations for future work.  Our entire framework is outlined in Algorithm \ref{alg:adaptiveclbf}.

\begin{algorithm}
\caption{BAyesian Learning-based Safety and Adaptation (BALSA)}
\label{alg:adaptiveclbf}
\textbf{Require:} Prior model $\hat{f}(x)$, known $g(x)$, reference trajectory $x_{rm}$, choice of modeling algorithm $\bar{\Delta}_i(x)\sim\mathcal{N}(m_i(x),\sigma_i(x))$, $dt$, $A$, $Hu\leq b$. \\
\textbf{Initialize:} $i=0$, Dataset $\mathcal{D}_0=\emptyset$, $t=0$, solve $P$\\
\While{true}{
Obtain $\mu_{rm}=\dot{x}_{2rm}(t)$ and compute $\mu_{pd}$\\
Compute model error and uncertainty $\mu_{ad}=m_i(x(t))$, and $\sigma_i(x(t))$ \\
$\mu_{qp} \leftarrow $ Solve QP (\ref{eq:clf_cbf_qp})\\
Set $u(t) = g(x)^{-1}(\mu_{rm} + \mu_{pd} + \mu_{qp} -\mu_{ad} - \hat{f}(x))$\\
Apply control $u(t)$ to system.\\
Step forward in time $t\leftarrow t+dt$.\\
Append new data point to database:\\
$\bar{X}_t = [x(t)]$, $\bar{Y}_t=(x_2(t+dt) - x_2(t))/dt - (\hat{f}(x(t)) + g(x(t)u(t))$. \\
$\mathcal{D}_{i}\leftarrow\mathcal{D}_i\cup\{\bar{X}_t,\bar{Y}_t\}$\\
\If{updateModel}{
    Update model $\bar{\Delta}_i(x,\mu)$ with database $\mathcal{D}_i$\\
    $\mathcal{D}_{i+1}\leftarrow\mathcal{D}_i$, $i\leftarrow i+1$
    }
}
\end{algorithm}


\section{Application to Fast Autonomous Driving}

In this section we validate BALSA on a kinematic bicycle model for car-like vehicles.  We model the state $x=[p_x,p_y,\theta,v]^\intercal$ as position in x and y, heading, and velocity respectively, with dynamics $\dot{x} = [v\cos(\theta),v\sin(\theta),v\tan(\psi)/L,a]^\intercal$.
where $a$ is the input acceleration, $L$ is the vehicle length, and $\psi$ is the steering angle.  We employ a simple transformation to obtain dynamics in the form of (\ref{eq:dynamics}).  Let $z=[z_1,z_2,z_3,z_4]^\intercal$ where $z_1=p_x$, $z_2=p_y$, $z_3=\dot{z}_1$, $z_4=\dot{z}_2$, and $c=\tan(\psi)/L$.  Let the controls $u=[c, a]^\intercal$.  Then $\dot{z}$ fits the canonical form of (\ref{eq:dynamics}).
To ascertain the importance of learning and adaptation, we add the following disturbance to $[\dot{z}_3, \dot{z}_4]^\intercal$ to use as a ``true" model:
\begin{equation}
    \delta(z) = \begin{bmatrix}\cos(\theta) & -\sin(\theta)\\\sin(\theta) & \cos(\theta)\end{bmatrix} \begin{bmatrix}-\tanh(v^2)\\-(0.1 + v)\end{bmatrix}
    \label{eq:car_dyn_true}
\end{equation}

This constitutes a non-linearity in the forward velocity and a tendency to drift to the right.

We use the following barrier function for pointcloud-based obstacles.  Similar to \cite{nguyen2016optimal}, we design this barrier function with an extra component to account for position-based constraints which have a relative degree greater than 1.  This is done by including the time-derivative of the position-based constraint as an additional term in the barrier function, which penalizes velocities (or higher order derivatives) leading to a decrease of the level set function $h$.  Let our safety set $\mathcal{C}=\{x\in\mathbb{R}^n|h(x,x')\geq0\}$, where $x'$ is the position of an obstacle.  Let $h(x,x') = \|(x-x')\|_2-r$ where $r>0$ is the radius of a circle around the obstacle.  Then construct a barrier function $B(x;x')= 1/(\gamma_p h(x,x')+\frac{d}{dt}h(x,x'))$.
As shown by \cite{Nguyen2016}, $B(x)$ is a CBF, where $\gamma_p$ helps to control the rate of convergence. We chose $\gamma_1(x),\gamma_2(x)=x$ and $\gamma_3(x)=\gamma/x$.

\subsection{Validation of BALSA in Simulation}

One iteration of the algorithm for this problem takes less than $4ms$ on a 3.7GHz Intel Core i7-8700K CPU, in Python code which has not been optimized for speed.  We make our code publicly available\footnote{\href{https://github.com/ddfan/balsa.git}{https://github.com/ddfan/balsa.git}}.  Because training the model occurs on a separate thread and can be performed anytime online, we do not include the model training time in this benchmark.  We use OSQP \cite{osqp} as our QP solver.

In Figure \ref{fig:toy_barriers}, we compare BALSA with several different baseline algorithms.  We use a Neural Network trained with dropout and a negative-log-likelihood loss function for capturing the uncertainty \cite{gal2016dropout}.  We place several obstacles in the direct path of the reference trajectory.  We also place velocity barriers for driving too fast or too slow.  We observe that the behavior of the vehicle using our algorithm maintains good tracking errors while avoiding barriers and maintaining safety, while the other approaches suffer from various drawbacks.  The adaptive controller (ad) and PD controller (pd) violate safety constraints.  The (qp) controller with an inaccurate model also violates constraints and exhibits highly suboptimal behavior (Figure \ref{fig:errors_losses}).  A robust (rob) formulation which uses a fixed robust bound which is meant to bound any model uncertainty \cite{nguyen2016optimal}, while not violating safety constraints, is too conservative and non-adaptive, has trouble tracking the reference trajectory.  In contrast, BALSA adapts to model error with guaranteed safety.  We also plot the model uncertainty and error in (Figure \ref{fig:errors_losses}).

\begin{figure}
    \centering
    \includegraphics[width=\linewidth]{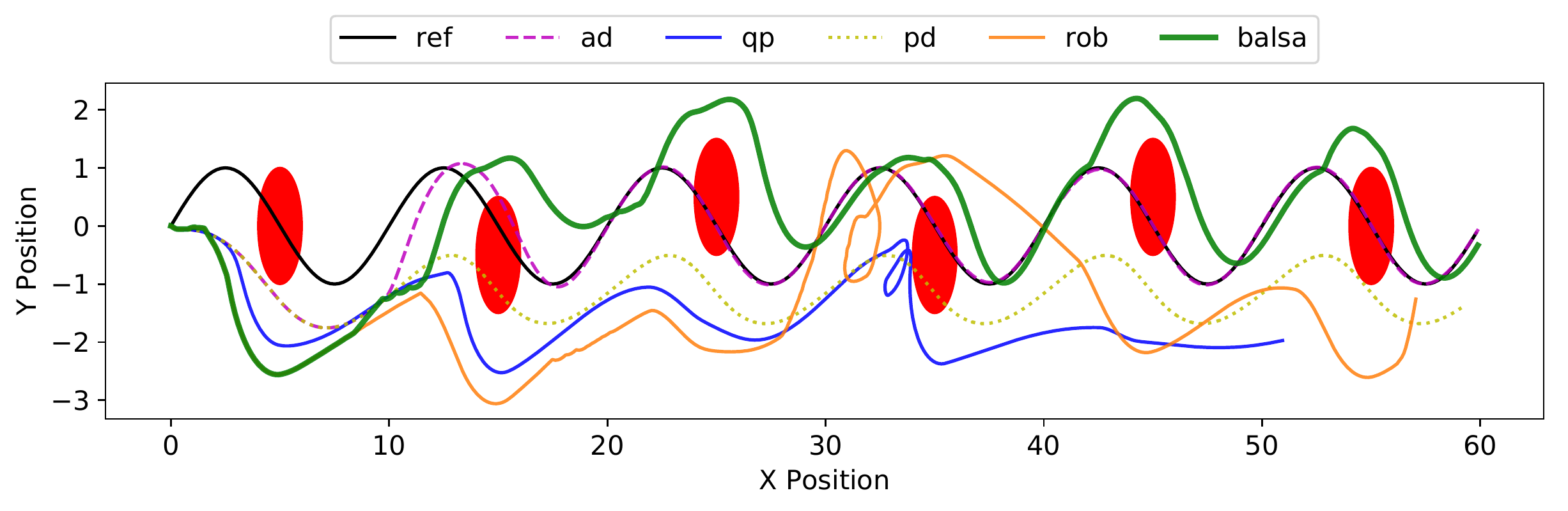}
    \caption{Comparison of the performance of four algorithms in tracking and avoiding barrier regions (red ovals). \textbf{ref} is the reference trajectory.  \textbf{ad} is an adaptive controller ($\mu_{rm} + \mu_{pd} - \mu_{ad})$.  \textbf{qp} is a non-adaptive safety controller ($\mu_{rm} + \mu_{pd} + \mu_{qp}$).  \textbf{pd} is a proportional derivative controller ($\mu_{rm}$ + $\mu_{pd}$).  \textbf{rob} is a robust controller which uses a fixed $\sigma_i(x)$ to compensate for modeling errors.  \textbf{balsa} is the full adaptive CLF-CBF-QP approach outlined in this paper and in Algorithm \ref{alg:adaptiveclbf}, i.e. ($\mu_{rm} + \mu_{pd} - \mu_{ad} + \mu_{qp}$).
    }
    \label{fig:toy_barriers}
\end{figure}

\begin{figure}
  \centering
  \includegraphics[width=\linewidth]{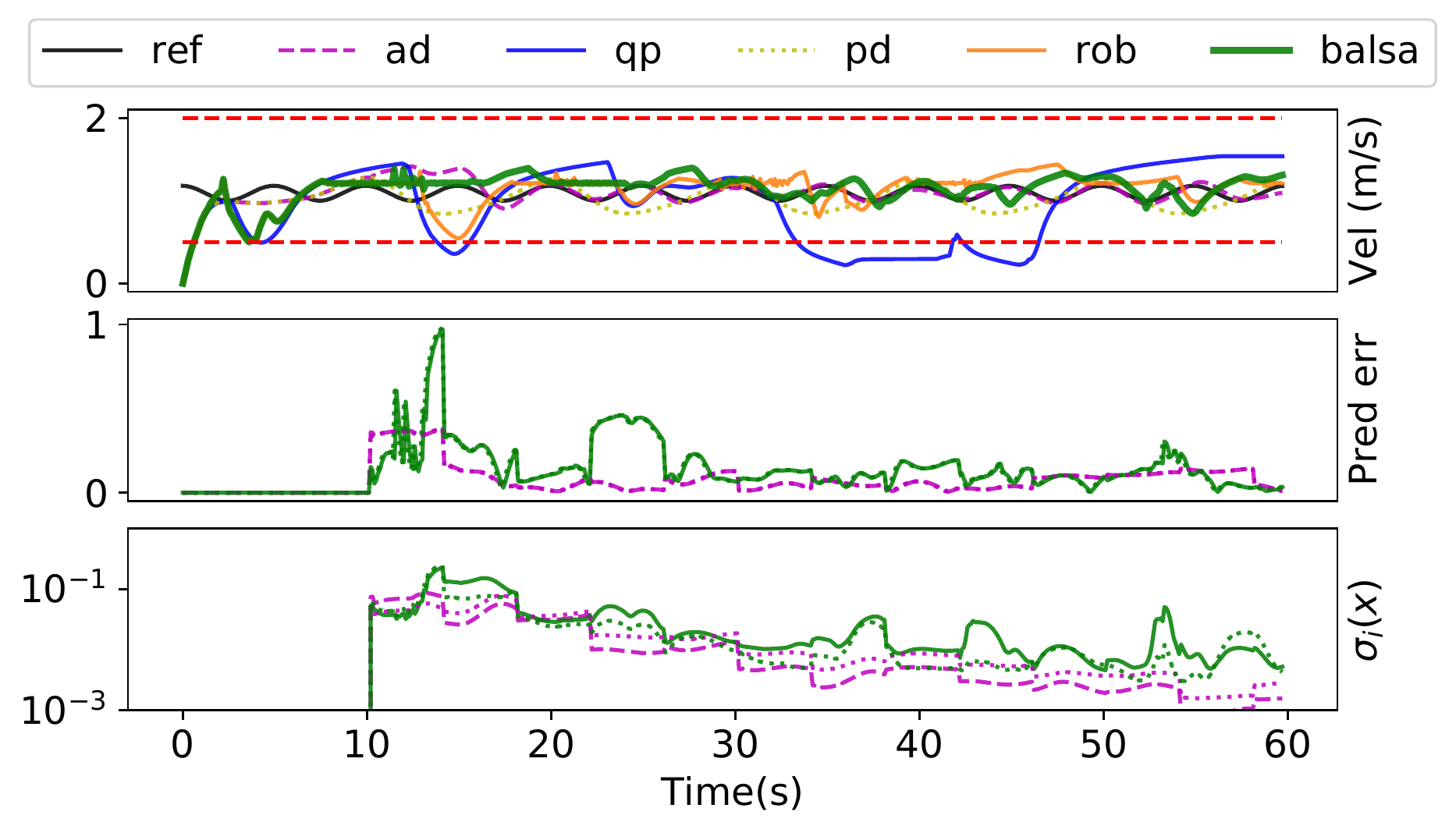}
  \caption{Top: Velocities of each algorithm.  Red dotted line indicates safety barrier. Middle: Output prediction error of model, decreasing with time.  Solid and dashed lines indicate both output dimensions. Bottom: Uncertainty $\sigma_i(x)$, also decreasing with time.  Predictions are made after 10 seconds to accumulate enough data to train the network.  During this time we choose an upper bound for $\sigma_0=1.0$.}
  \label{fig:errors_losses}
\end{figure}

\subsection{Comparing Different Modeling Methods in Simulation}

Next we compared the performance of BALSA on three different Bayesian modeling algorithms:  Gaussian Processes, a Neural Network with dropout, and ALPaCA \cite{Harrison2018}, a meta-learning approach which uses a hybrid neural network with Bayesian regression on the last layer. For all methods we retrained the model intermittently, every 40 new datapoints.  In addition to the current state, we also included as input to the model the previous control, angular velocity in yaw, and the current roll and pitch of the vehicle.  For the GP we re-optimized hyperparameters with each training.  For the dropout NN, we used 4 fully-connected layers with 256 hidden units each, and trained for 50 epochs with a batch size of 64. Lastly, for ALPaCA we used 2 hidden layers, each with 128 units, and 128 basis functions.  We used a batch size of 150, 20 context data points, and 20 test data points. The model was trained using 100 gradient steps and online adaption (during prediction) was performed using 20 of the most recent context data points with the current observation (see \cite{Harrison2018} for details of the meta-learning capabilities of ALPaCA).  At each training iteration we retrain both the neural network and the last Bayesian linear regression layer.  Figure (\ref{fig:model_compare}) and Table (\ref{table:1}) show a comparison of tracking error for these methods.  We found GPs to be computationally intractable with more than 500 data points, although they exhibited good performance.  Neural networks with dropout converged quickly and were efficient to train and run.  ALPaCA exhibited slightly slower convergence but good tracking as well.

\begin{figure}
\includegraphics[width=0.35\linewidth]{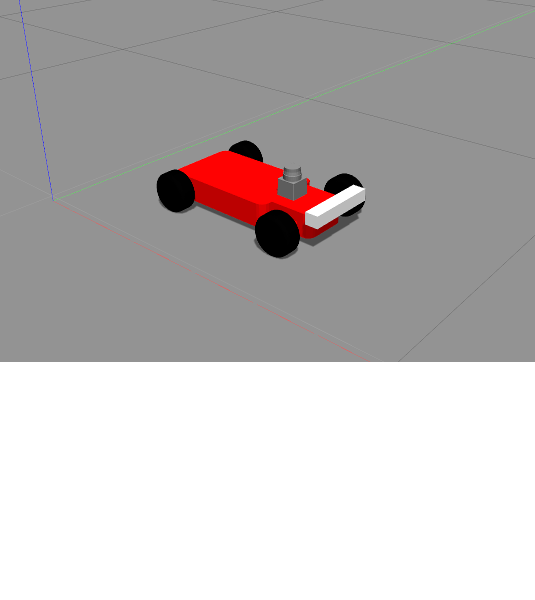}
\includegraphics[width=0.6\linewidth]{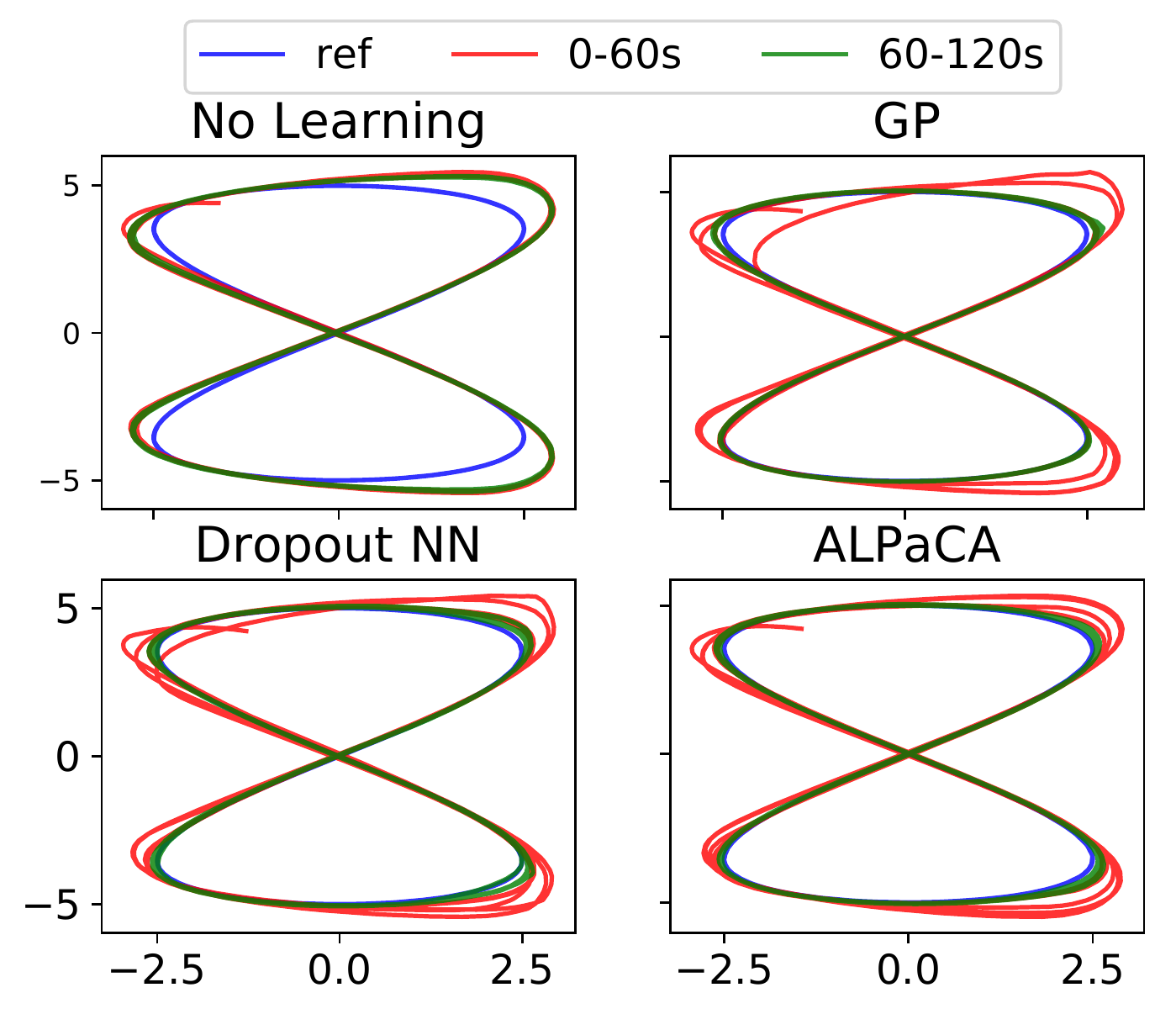}
\caption{Comparison of adaptation performance in a Gazebo simulation using three different probabilistic model learning methods.}
\label{fig:model_compare}
\end{figure}

\begin{table}
\centering
\begin{tabular}{ |c|c|c|c|c| } 
\hline
 & No learn & GP & Dropout & ALPaCA \\
\hline
\hline
0-60s & 0.580 & 0.3992 & 0.408 & 0.390 \\
\hline
60-120s & 0.522 & 0.097 & 0.105 & 0.110 \\ 
\hline
\end{tabular}
\caption{Average tracking error in position for different modeling methods in sim, split into the first minute and second minute.}
\label{table:1}
\end{table}

\subsection{Hardware Experiments on Martian Terrain}
To validate that BALSA meets real-time computational requirements, we conducted  hardware experiments on the platform depicted in Figure (\ref{fig:adapt_compare_hw}).  We used an off-the shelf RC car (Traxxas Xmaxx) in 1/5-th scale (wheelbase 0.48 m), equipped with sensors such as a 3D LiDAR (Velodyne VLP-16) for obstacle avoidance and a stereo camera (RealSense T265) for on-board for state estimation.  The power train consists of a single brushless DC motor, which drives the front and rear differential, operating in current control mode for controlling acceleration.  Steering commands were fed to a servo position controller.  The on-board computer (Intel NUC i7) ran Ubuntu 18.04 and ROS \cite{quigley2009ros}.

Experiments were conducted in a Martian simulation environment, which contains sandy soil, gravel, rocks, and rough terrain.  We gave figure-eight reference trajectories at 2m/s and evaluated the vehicle's tracking performance (Figure \ref{fig:adapt_compare_hw}).  Due to large achieving good tracking performance at higher speeds is difficult.  We observed that BALSA is able to adapt to bumps and changes in friction, wheel slip, etc., exhibiting improved tracking performance over a non-adaptive baseline (Table \ref{table:2}).

\begin{figure}
\includegraphics[trim=20 20 0 15,clip,width=0.35\linewidth]{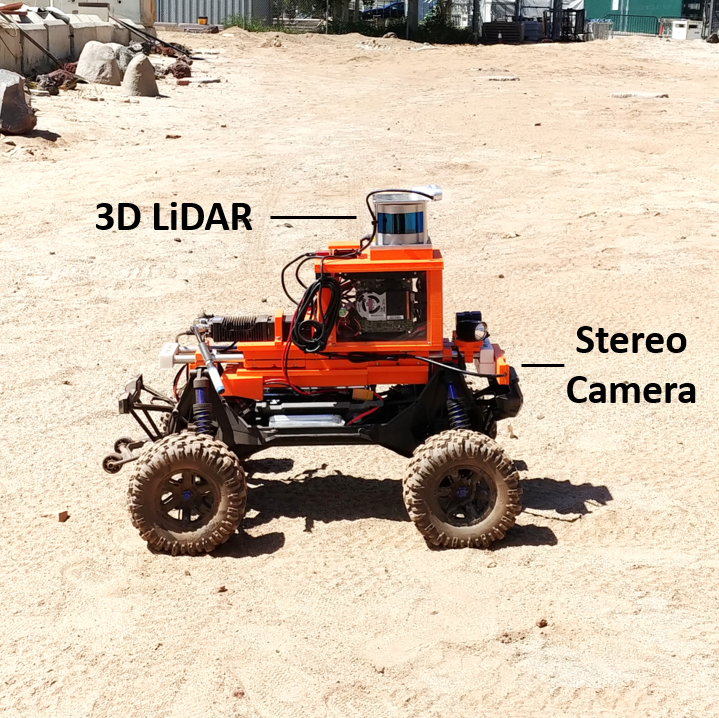}
\includegraphics[width=0.60\linewidth]{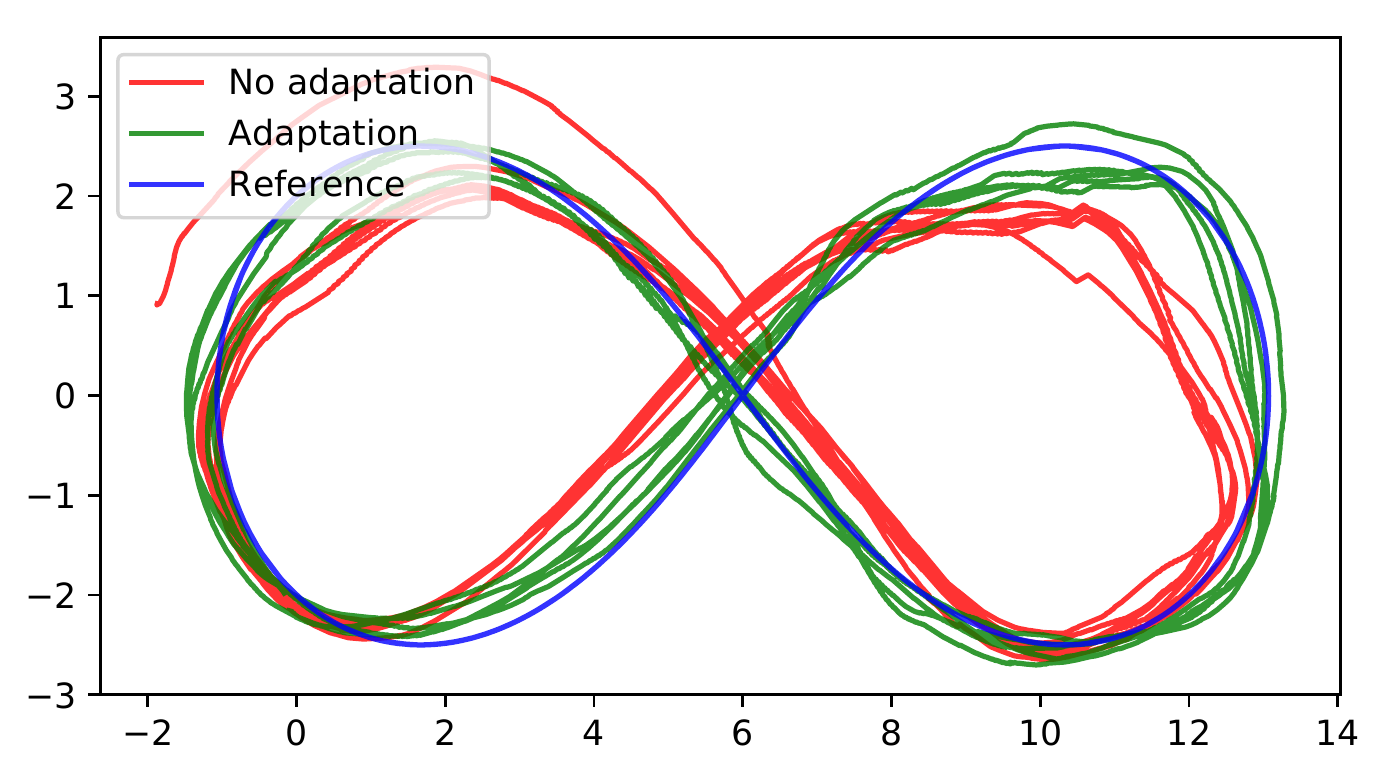}
\caption{Left: A high-speed rover vehicle.  Right: Figure-8 tracking on our rover platform on rough and sandy terrain, comparing adaptation vs. no adaptation.}
\label{fig:adapt_compare_hw}
\end{figure}

\begin{table}
\centering
\begin{tabular}{ |c|c|c|c| } 
\hline
 & Mean Err & Std Dev & Max \\
\hline
\hline
No Learn & 1.417 & 0.568 & 6.003 \\
\hline
Learning & 0.799 & 0.387 & 2.310 \\ 
\hline
\end{tabular}
\caption{Mean, standard deviation, and max tracking error on our rover platform for a figure-8 task.}
\label{table:2}
\end{table}

We also evaluated the safety of BALSA under adaptation.  We used LiDAR pointclouds to create barriers at each LiDAR return location.  Although this creates a large number of constraints, the QP solver is able to handle these in real-time.  Figure \ref{fig:collision_avoid_hw} shows what happens when an obstacle is placed in the path of the reference trajectory.  The vehicle successfully slows down and comes to a stop if needed, avoiding the obstacle altogether.

\begin{figure}
  \centering   
  \begin{overpic}[width=0.8\linewidth]{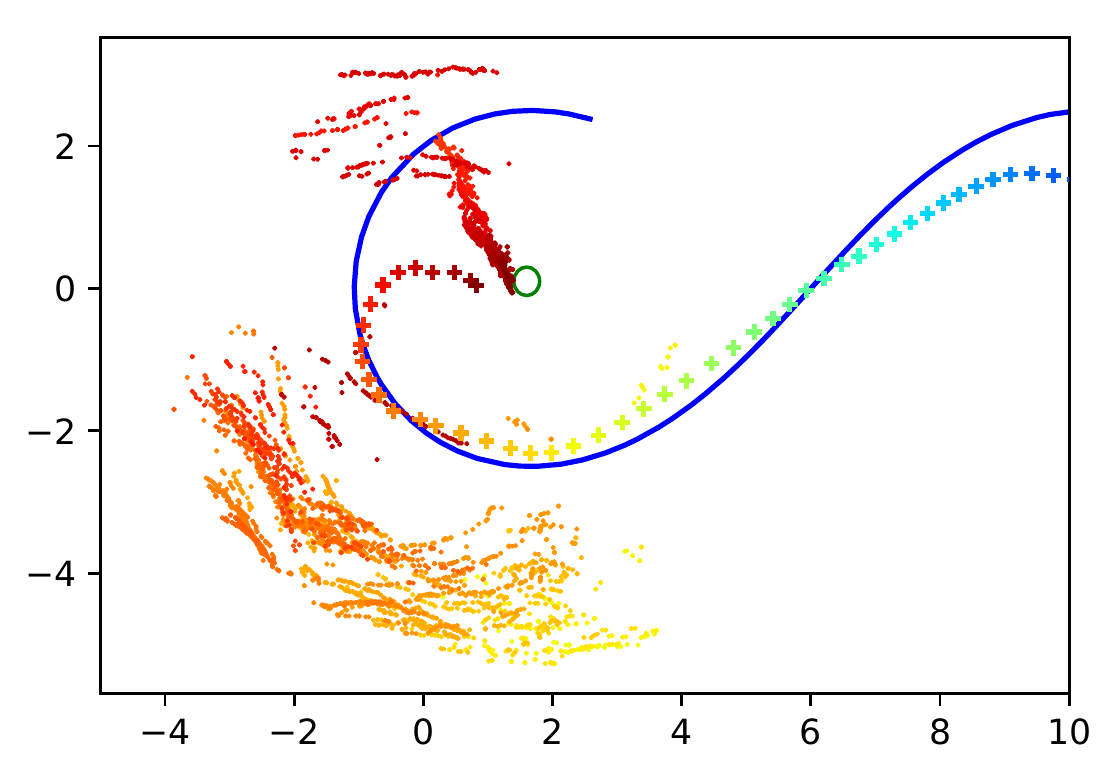}
     \put(58,9){\includegraphics[scale=0.24]{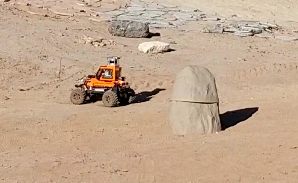}}  
  \end{overpic}
\caption{Vehicle avoids collision despite localization drift and unmodeled dynamics.  Blue line is the reference trajectory, colored pluses are the vehicle pose, colored points are obstacles.  Colors indicate time, from blue (earlier) to red (later).  Note that localization drift results in the obstacles appearing to shift position.  Green circle indicates location of the obstacle at the last timestep.  Despite this drift the vehicle does not collide with the obstacle.}
\label{fig:collision_avoid_hw}
\end{figure}


\section{Conclusion}
     
In this work, we have described a framework for safe, fast, and computationally efficient probabilistic learning-based control.  The proposed approach satisfies several important real-world requirements and take steps towards enabling safe deployment of high-dimensional data-driven controls and planning algorithms.  Further development other types of robots including drones, legged robots, and manipulators is straightforward.  Incorporating better uncertainty-representing modeling methods and training on higher-dimensional data (vision, LiDAR, etc) will also be a fruitful direction of research.


\section*{Acknowledgement}

The authors would like to thank Joel Burdick's group for their hardware support.  This research was partially carried out at the Jet Propulsion Laboratory (JPL), California Institute of Technology, and was sponsored by the JPL Year Round Internship Program and the National Aeronautics and Space Administration (NASA).  Jennifer Nguyen was supported in part by NASA EPSCoR Research Cooperative Agreement WV-80NSSC17M0053 and NASA
West Virginia Space Grant Consortium, Training Grant \#NX15AI01H.  Evangelos A. Theodorou was supported by the C-STAR Faculty Fellowship at Georgia Institute of Technology.  Copyright \textcopyright 2019. All rights reserved.

\bibliographystyle{IEEEtran}
\bibliography{main}

\end{document}